\newtheorem{thm}{Theorem}
\DeclareMathAlphabet{\pzc}{OT1}{pzc}{m}{it}
\newcommand{\mc}[1]{\ensuremath{\mathcal{#1}}}
\title{Designing and Embedding Reliable Virtual Infrastructures}
\author{Wai-Leong Yeow, C\'{e}dric Westphal and Ula\c{s} C. Kozat}
\email{wlyeow@ieee.org, \{cwestphal,kozat\}@docomolabs-usa.com}
\address{DoCoMo USA Labs\\3240 Hillview Ave, Palo Alto, CA 94304, USA}
\date{Wednesday 31$^\text{st}$ March, 2010}
\begin{document}

\maketitle

\begin{abstract}
In a virtualized infrastructure where physical resources are shared, a single physical server failure will terminate several virtual servers and crippling the virtual infrastructures which contained those virtual servers. In the worst case, more failures may cascade from overloading the remaining servers. To guarantee some level of reliability, each virtual infrastructure, at instantiation, should be augmented with backup virtual nodes and links that have sufficient capacities. This ensures that, when physical failures occur, sufficient computing resources are available and the virtual network topology is preserved. However, in doing so, the utilization of the physical infrastructure may be greatly reduced. This can be circumvented if backup resources are pooled and shared across multiple virtual infrastructures, and intelligently embedded in the physical infrastructure. These techniques can reduce the physical footprint of virtual backups while guaranteeing reliability.
\end{abstract}

\tableofcontents

\section{Introduction} 
\label{sec:intro}

With infrastructure rapidly becoming virtualized, shared and dynamically changing, it is essential to provide strong reliability to the physical infrastructure, since a single physical server or link failure affects several shared virtualized entities. Providing reliability is often linked with over-provisioning both computational and network capacities, and employing load balancing for additional robustness. Such high availability systems are good for applications where large discontinuity may be tolerable, e.g. restart of network flows while re-routing over link or node failures, or partial job restarts at node failures. A higher level of fault tolerance is required at applications where some failures have a substantial impact on the current {\em state} of the system. For instance, virtual networks with servers which perform admission control, scheduling, load balancing, bandwidth broking, AAA or other NOC operations that maintain snapshots of the network state, cannot tolerate total failures. In master-slave/ worker architectures, e.g. MapReduce, 
failures at the master nodes waste resources at the slaves/workers.

Through synchronization~\cite{Bressoud96,Brendan08} and migration techniques \cite{Clark05,Wang08} on virtual machines and routers, we postulate that {\em fault tolerance can be introduced at the virtualization layer}. This has several benefits. Different levels of reliability can be customized and provisioned over the same physical infrastructure. There is no need for specialized, fault tolerant servers. Instead, redundant (backup) virtual servers can be created dynamically, and resources are pulled together, increasing the primary capacity. Both will lead to a better overall utilization of the physical infrastructure.

In this paper, we propose an Opportunistic Redundancy Pooling (ORP) mechanism to leverage the properties of the virtualized infrastructure and achieve a $n:k$ redundancy architecture, where $k$ redundant resources can be backups for {\em any} of the $n$ primary resources, and {\em share} the backups across multiple virtual infrastructures (VInfs).

For a quick motivating example, consider two VInfs with $n_1$ and $n_2$ computing nodes. They would require $k_1$ and $k_2$ redundancy to be guaranteed reliability of $r_1$ and $r_2$, respectively. Sharing the backups will achieve a redundancy of $k' = \max(k_1,k_2)$ with the same level of reliability, reducing the resources that are provisioned for fault tolerance by at most 50\%.

In addition, there is joint node and link redundancy such that a redundant node can take over a failed node with guaranteed connectivity and bandwidth. ORP ensures VInfs do not connect to more redundant nodes than necessary in order to keep the number of redundant links low.


The other contribution of this paper is a method to statically allocate physical resources (compute capacity and bandwidth) to the primary and redundant VInfs simultaneously, taking into account the output of the ORP mechanism. It attempts to reduce resources allocated for redundancy by utilizing existing redundant nodes, and overlapping bandwidths of the redundant virtual links as much as possible.

Our paper focuses on the problem of resource allocation for virtual infrastructure embedding with reliability guarantee. Practical issues such as system health monitoring, protocol design, recovery procedures, and timing issues are out of the scope of this paper.

The organization of this paper is as follows. In the next section, we briefly describe the background, notations and define reliability in~\autoref{sec:ps}. Then, we describe a virtual architecture that can provide fault tolerance and estimate the benefits of sharing redundancies in~\autoref{sec:redgain}. We see how the link topology is preserved under failures in~\autoref{sec:preserve}, and how resources can be efficiently allocated in the physical infrastructure in~\autoref{sec:resAlloc}.
Finally, we evaluate and validate the ideas through simulation in~\autoref{sec:eval}, present related work in~\autoref{sec:related} , and~\autoref{sec:con} concludes this paper.


\section{Problem Statement} 
\label{sec:ps}

We consider a resource allocation problem in a virtualized infrastructure, such as a data center, where the virtualized resources can be leased with reliability guarantees. The physical infrastructure is modeled as an undirected graph $\mc{G} = (\mc{N}, \mc{E})$, where $\mc{N}$ is the set of physical nodes and $\mc{E}$ is the set of physical links. Each node $\mu \in \mc{N}$ has an available computational capacity of $\Gamma_\mu$. Each undirected link $(\mu,\nu) \in \mc{E}, \mu, \nu \in \mc{N}$ has an available bandwidth capacity of $\Lambda_{\mu\nu}$.

Each resources lease request is modeled as an undirected graph $G = (N, E)$. $N$ is a set of compute nodes and $E$ is a set of edges. We call this a virtual infrastructure (VInf). $\gamma_u$ is the computation capacity requirement for each node $u \in N$, and bandwidth requirements between nodes are $\lambda_{uv}$, $(u,v) \in E$ and $u, v \in N$. 

Reliability is guaranteed on the set of critical nodes $C \subseteq N$ of a VInf $G$ through redundant virtual nodes in the physical infrastructure $\mc{G}$. A backup (redundant) node $b$ must be able to assume full execution of a failed critical node $c$. Hence, the backup node must have sufficient resources in terms of computation $\gamma_b \ge \gamma_c$ and bandwidth to neighbors of $c$: $\lambda_{bu} \ge \lambda_{cu}, \forall u \in N, (c,u) \in E$.

The problem is, thus, to allocate as little resources as possible for a VInf $G$ on a physical infrastructure $\mc{G}$, including redundancy such that a reliability guarantee of at least $r$ is achieved. We explain the definition of reliability in greater detail in the next section.


\section{Reliability} 
\label{sec:reliability}

We define reliability as the probability that critical nodes of a VInf remain in operation, over all possible node failures. This is not to be confused with availability, which is defined as a ratio of uptime to the sum of uptime and downtime~\cite{reliabilitybook}. As an example, the reliability of a physical node under a renewal process is
\begin{equation}
	1 - \frac{1}{\text{MTBF}} ,
\end{equation}
whereas the availability is
\begin{equation}
	\frac{\text{MTBF}}{\text{MTBF + MTTR}},
\end{equation}
where MTBF is the mean time between failures as specified by the manufacturer, and MTTR is the mean time to recover from a failure. Hence, by guaranteeing a reliability of $r$, we are ensuring that there are sufficient redundant physical resources available in times of failure, with probability $r$. For a VInf with $n$ critical nodes and $k$ backup nodes, we want to ensure that
\begin{equation}
	Pr( k \; \text{out of}\; n + k \; \text{virtual nodes fail}) < 1 - r .
\end{equation}
This covers cases where some critical and backup nodes fail simultaneously. Guaranteeing availability, on the other hand, is ensuring that the system MTTR is low enough with respect to the system MTBF.

\subsection{Failover configuration} 
\label{sec:failover}

While provisioning redundant resources is a fundamental approach to guaranteeing reliability, there are two main classes of configurations in dealing with the redundant resources:
\begin{description}
	\item[Active/Active] All available nodes, including the $k$ redundant ones, are online, with an external load balancer distributing load across them. When physical failures happen, the load balancer can redistribute the load to the remaining nodes. This inherently assumes that (i) all nodes are homogeneous, and (ii) load can be distributed easily among all nodes.
	\item[Active/Passive] Redundant nodes are kept idle. When $x$ virtual nodes fail, $x$ redundant nodes are activated to take over. In the simplest setup, $n$ sets of $1:m$ backups are required. Nodes can be heterogeneous but the number of redundant nodes is restricted to $k = nm$. However, if each redundant node is capable of assuming the operation of more than one critical node, it is then possible to have $k < n$.
\end{description}

In the Active/Active configuration, the number of redundant nodes $k$ is the minimum. In comparison, $k$ is actually the same as that of the Active/Passive configuration for the same level of reliability if each redundant node has sufficient resource to assume the operations of any $n$ critical node. This is because the load in Active/Active configuration can always be shifted to all $n$ critical nodes, leaving all $k$ redundant nodes idle.

With Active/Passive configuration where critical nodes are heterogeneous, all redundant nodes must have all $n$ critical virtual machine (VM) images on disk (but not running in memory) in order to minimize $k$. This means that when a backup takes over, the critical VM's state is essentially rebooted, i.e., redundant nodes are ``cold spares''. On the other hand, the Active/Active configuration will not be able to support heterogeneous VMs.

It is possible to have the redundant nodes as ``hot spares'' in the Active/Passive configuration. That will require active synchronization techniques such as Remus~\cite{Brendan08}, ample memory in the physical node and bandwidth within the network for all $n$ states and their updates, respectively. A technique called Difference Engine~\cite{DiffEngine09} is able to reduce the physical footprint of the $n$ memory states, by keeping only one copy of the similar pages between the $n$ memory states and selectively compressing the remaining differences.

Redundant nodes can be further pooled and shared across several VInfs in the Active/Passive configuration since heterogeneous VMs can be supported, and further reducing redundant resources. This gives the Active/Passive configuration another advantage over the other. More details will be explained in \autoref{sec:redgain}. For the rest of this paper, we assume the Active/Passive configuration is used due to the reduction in redundancy.


\subsection{How many backups?} 
\label{sec:numbackup}

The number of redundant nodes depend on the physical mapping, and the failure models of both the physical nodes and the virtual infrastructure. For example, assume all $n$ critical nodes and all $k$ backups are placed on physical nodes A and B, respectively. Let $F_x$ be some set where $x$ out of the $n$ critical virtual nodes has failed. Then, the reliability of this example system is
\begin{multline}
	r = Pr(\text{A works and B fails})  Pr( F_0 ) + \mbox{}\\
	Pr(\text{A fails and B works}) Pr( n \le k ) + \mbox{}\\
	Pr(\text{A and B works }) 
	\sum_{x=1}^{\min(n,k)} \sum_{F_x \subseteq \{1, \ldots, n\} } Pr( F_x ) .
\end{multline}
There are several problems with this approach:
\begin{enumerate}
	\item computing the reliability is complex due to tight correlation between physical and virtual nodes.
	\item the reliability is severely limited by the reliability of the two physical nodes, and
	\item the reliability can never be increased beyond $k = n$,
\end{enumerate}
As such, we impose two physical mapping constraints: (i) each virtual node is only mapped to one physical node, and (ii) the mapped physical nodes are placed apart to avoid correlated failures among the physical machines, e.g. on different racks with different power supplies. This way, the failure rate of a virtual node $u$ is directly derived from the physical node $\mu$ it is mapped on. Guaranteeing reliability can then be focused on the failure model of the virtual infrastructure. In general, the reliability of the overall virtual infrastructure (including $k$ redundant nodes) can then be computed as
\begin{equation}
	\label{eqn:genericReliability}
	\begin{split}
		r(k) &= \sum_{y=0}^k Pr( \text{$y$ out of $n+k$ nodes fail} ) \\
		&= \sum_{y=0}^{k} \sum_{x=0}^{\min(n,y)} Pr( \text{a set of $y-x$ backup nodes fail} ) f(x) ,
	\end{split}
\end{equation}
for some failure probability distribution $f(x)$ in which $x$ is the number of critical nodes that failed. \eqref{eqn:genericReliability} can be simplified to
\begin{equation}
	\label{eqn:simpReliability}
	r(k) = \sum_{y=0}^{k} \sum_{x=0}^{\min(n,y)} {k \choose y-x} p^{y-x} (1-p)^{k-(y-x)} f(x) .
\end{equation}
The binomial term is due to independent failures of the redundant nodes, and the assumption that physical nodes hosting them are homogeneous with a failure rate $p$.

\subsubsection{Cascading Failures} 
\label{sec:cascade}

It is also possible to compute the reliability for cascading failure models of these critical nodes. There is a wealth of studies on various cascading failure models in literature. In this paper, we list three models and briefly describe how the innermost sum $\sum_{F_x \subseteq \{1, \dotsc, n\}} Pr(F_x)$ of \eqref{eqn:genericReliability} can be computed. For a more detailed discussion, please refer to the appendix.
\begin{description}
	\item[Load-based~\cite{DobsonI05}] This model assumes a node will fail if its load exceed a predefined value. Once some nodes fail, the load on other nodes are incremented with a value proportional to the number of failures. The failure cascades if more nodes fail from the overloads. The main result from this model is the distribution of the total number of failures $x$, which can directly replace the term $f(x)$.
	\item[Tree-based~\cite{IyerSM09}] This model uses a continuous-time Markov Chain (CTMC) to analyze cascading failures. A node failure will stochastically cause nodes from other categories to fail. There is a renewal repair process for each node, as well as redundant nodes for each node category. A procedure is given to compute the generator matrix $Q$ for the CTMC, which is used as a basis for analyzing various reliability metrics of the system. For our purpose, we can obtain $Q$ of the system {\em without any backups} by setting the renewal rate to follow the behavior of MTTR, and the number of redundant nodes per category to $0$. Subsequently, $Pr(F_x)$ is a direct one-to-one mapping to the steady-state probabilities, which can be obtained by solving the null-space of $Q$.
	\item[Degree-based~\cite{WattsDJ02}] Each node has a predefined failure threshold between 0 and 1. The VInf is initially perturbed with some random node failures. The failure will cascade to a neighboring node if the neighbors of that node that have failed is beyond its failure threshold. For a large $n$, a global cascading failure will occur with some probability $f(n)$ if the average degree of the VInf is lower than some value. Since $f(x)$ is unknown for $x < n$, \eqref{eqn:simpReliability} uses a worst-case distribution for $f(x)$ where $f(n-1) = 1-f(n)$. We refer the reader to the appendix for more details.
\end{description}

Once $f(x)$ is obtained, a numerical method for searching $k$ can then be used to ensure guarantee a certain level of reliability $r$. With a binary search algorithm, the complexity is in the order of $O(n^2 \log n)$. A proof is given in the appendix.

\subsubsection{Independent Failures} 
\label{sec:indepFail}

For ease of exposition, we focus in this document on independent node failures.
If the failure rates of $n$ critical nodes are independent and uniform, then the reliability of the whole system is
\begin{equation}
	\label{eq:rIndep}
	\begin{split}
		r(k)	&= \sum_{x=0}^k { n+k \choose x } p^x (1-p)^{n+k-x} \\
			&= I_{1-p} (n, k+1) ,
	\end{split}
\end{equation}
where $I_{1-p} (\cdot,\cdot)$ is the regularized incomplete beta function with parameter $1-p$~\cite{handbook}. The minimum number of redundant nodes $k$ is then the integer ceiling of the inverse of this function.



\section{Redundancy pooling: quantization gains} 
\label{sec:redgain}

In Active/Passive configuration, redundant nodes that are provisioned for one VInf can be shared with another VInf, since they are idle. This is not possible in the Active/Active configuration because all virtual nodes will have to be running. The ability to share redundant nodes allows the Active/Passive configuration to reduce the amount of redundant resources within the physical infrastructure.

\subsection{Arbitrary Pooling} 
\label{sec:arbpool}

To simplify discussion, we first assume the case where failure rates of all critical and physical nodes are independent and uniform, i.e, $p_{u\mu} = p$ for all virtual nodes $u$ and its physical host $\mu$. \autoref{fig:scaleNK} shows the number of backup nodes required as the number of critical nodes increase for a reliability guarantee of $99.999\%$ over various failure probabilities $p = 0.01, \ldots, 0.05$. The range of failure values were chosen due to a recent Intel study on physical server failures in data centers of different locations and with different types of cooling \cite{intel}.

\begin{figure}
	\centering
	\includegraphics[width=.7\linewidth]{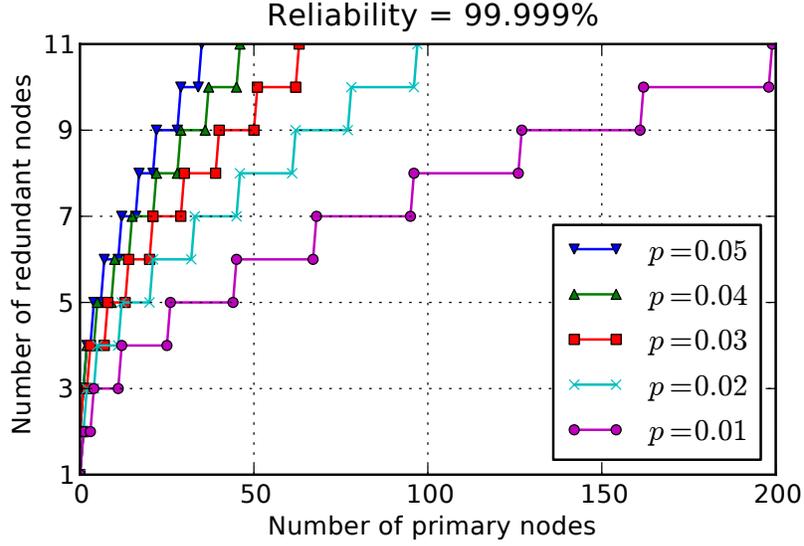}
	\caption{Minimum number of redundant nodes needed as backups for $n$ critical nodes, given a reliability guarantee of 99.999\%.}
	\label{fig:scaleNK}
\end{figure}

By observation from these curves, one intuitive way to reduce the number of backup nodes is to exploits the curves' sub-linear property: if two or more VInfs pool their backup nodes, the total number of backup nodes required is reduced. For example, for the case of $p = 0.01$, two VInfs of 100 critical nodes each will require a total of 16 backup nodes ($k = 8$ each). This number can be reduced to 11 when both VInfs (gives $n = 200$) pool their backup nodes together, saving redundant resources by 31.25\%. Unfortunately, this approach has two potential limitations: (i) the pooling advantage disappears for VInfs with large $n$ and (ii) redundant bandwidth maybe increased while reducing redundant backup nodes. In the former case, we have the following result:
\begin{thm}
	\label{thm:linearK}
	Given $n$ is related to $k$, $p$ and $r$ as in \eqref{eq:rIndep}. For large $k$, $\frac{n}{k}$ is a constant and is independent of $r$, i.e.,
\begin{equation}
	\lim_{k \to \infty} \frac{n}{k} = \frac{1}{p} - 1 .
\end{equation}
\end{thm}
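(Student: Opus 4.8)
The plan is to read \eqref{eq:rIndep} probabilistically: the quantity $r(k) = \sum_{x=0}^{k} \binom{n+k}{x} p^x (1-p)^{n+k-x}$ is exactly $\Pr(X \le k)$, where $X$ counts the failures among the $N := n+k$ nodes and $X \sim \mathrm{Binomial}(N,p)$. Fixing $p$ and the target $r \in (0,1)$, the relation tying $n$ to $k$ is the requirement that $\Pr(X \le k) \ge r$ with $k$ minimal. The intuition I would exploit is a separation of scales: $X$ has mean $Np = (n+k)p$ and standard deviation $\sqrt{Np(1-p)}$, so holding the cumulative probability at a constant $r$ forces the threshold $k$ to sit only a bounded number of standard deviations above the mean.

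First I would invoke the De Moivre--Laplace (central limit) approximation for the binomial tail to write, for large $N$,
\begin{equation}
	k - (n+k)p \;\approx\; z_r \sqrt{(n+k)\,p(1-p)},
\end{equation}
where $z_r = \Phi^{-1}(r)$ is the standard normal quantile. The left side is the ``gap'' between the threshold and the mean. Rearranging and dividing through by $k$ gives
\begin{equation}
	(1-p) - \frac{n}{k}\,p \;\approx\; z_r \sqrt{\frac{(n/k + 1)\,p(1-p)}{k}} .
\end{equation}
Provided $n/k$ stays bounded as $k \to \infty$, the right side is $O(1/\sqrt{k}) \to 0$, so the left side must vanish, yielding $\frac{n}{k} \to \frac{1-p}{p} = \frac{1}{p} - 1$. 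Crucially, $z_r$ enters only through the vanishing correction term, which is exactly why the limit is independent of $r$.

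To make this rigorous rather than heuristic, I would replace the central-limit step by a two-sided squeeze using Chernoff bounds, avoiding any need to quantify the normal-approximation error. Fixing $\epsilon > 0$: if $k \le N(p-\epsilon)$, then $\Pr(X \le k)$ decays exponentially in $N$ and eventually drops below $r$, so the condition $r(k) \ge r$ forces $k/N \ge p - \epsilon$ for large $k$; symmetrically, the minimality of $k$ (the condition $r(k-1) < r$) rules out $k \ge N(p+\epsilon)$, since there $\Pr(X \le k-1) \to 1 > r$, forcing $k/N \le p + \epsilon$. Hence $k/N \to p$, and since $n/k = N/k - 1 = (k/N)^{-1} - 1$, we recover $n/k \to 1/p - 1$. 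The squeeze also supplies, for free, the a priori boundedness of $n/k$ assumed in the heuristic above.

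The main obstacle I anticipate is precisely this control of the correction term: one must confirm that the reliability requirement pins the threshold to within $O(\sqrt{N})$ of the mean, so that the gap is genuinely lower order than the $O(N)$ quantities whose ratio we are taking, and one must handle the integrality of $n$ and $k$ together with the discreteness of $r(k)$. Both are cleanly dispatched by the Chernoff squeeze, since the exponential separation between the ``too small'' and ``too large'' regimes dwarfs any off-by-one integer effects in the limit.
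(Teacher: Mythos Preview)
Your heuristic argument is essentially the paper's own proof: both read \eqref{eq:rIndep} as $\Pr(X\le k)=r$ for $X\sim\mathrm{Binomial}(n+k,p)$, invoke the normal approximation to place $k$ at $p(n+k)+\xi_r\sqrt{p(1-p)(n+k)}$, divide through, and let the $O(1/\sqrt{n+k})$ correction vanish. The paper stops at that heuristic level (and in fact labels the step ``Strong Law of Large Numbers'' while really using the CLT), whereas your Chernoff squeeze is an addition that supplies the rigor the paper omits---in particular it handles the boundedness of $n/k$ and the integrality issues that the paper's argument leaves implicit.
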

\begin{proof}
	The function $I_{1-p}(n,k+1) = r$ is the CDF of a Binomial variable $\chi$, which characterizes the number of node failures from a pool of $n+k$ nodes, each with a failure probability $p$, i.e.,
	\begin{equation*}
		I_{1-p}(n,k+1) = P\{\chi \le k\} = r .
	\end{equation*}
	Since $n > 0$ and is non-decreasing as $k \to \infty$, by Strong Law of Large Numbers
	\begin{equation*}
		\chi = p(n+k) + \sqrt{p(1-p)(n+k)} \mathcal{N}_{0,1} ,
	\end{equation*}
	Then, for some constant $\xi_r$,
	\begin{align*}
		k &= p(n+k) + \xi_r \sqrt{p(1-p)(n+k)} \\
		\frac{k}{n+k} &= p + \xi_r \sqrt{\frac{p(1-p)}{n+k}} \\
		k \to \infty, \qquad \frac{k}{n+k} &= p \\
		\frac{n}{k} &= \frac{1}{p} - 1 .
	\end{align*}
\end{proof}
Since $k$ tends to be linear with $n$ for VInfs with a large number of critical nodes, the benefit of sharing backup nodes diminishes in these cases as the number of backup nodes will not be reduced further.

The second limitation is that more redundant links maybe required when pooling backup nodes. We use the same example where two VInfs with 100 critical nodes each can share the pool of 11 backup nodes to illustrate this. If the two VInfs were not sharing backup nodes, each VInf would have 8 backup nodes instead of 11. Since each backup node must be able to resume full execution of the critical nodes, each backup node will need additional redundant links to neighbors of all critical nodes\footnote{The number of additional links is $O(nk + k^2)$ for $n$ critical nodes and $k$ backup nodes. The $nk$ term is due to links between neighbors of all $n$ critical nodes to the $k$ backups, and the $k^2$ term is due to links between backup nodes. We describe this in detail in \autoref{sec:preserve}}, as described in \autoref{sec:ps}. Thus, each VInf will need an additional three sets of redundant links with redundancy pooling. Savings in computational resource in this case may not justify the increase in bandwidth reserved for redundancy. Furthermore, this increases overhead in a hot standby configuration as each VInf has more backup nodes to synchronize to.

\autoref{fig:mapRshare} illustrates this tradeoff when two VInfs (VInf-1 and VInf-2) arbitrarily pool the backup nodes with $p = 0.02$. Denote by $n_1$ and $k_1$ the number of critical nodes of VInf-1 and the minimum number of backup nodes required for reliability of 99.999\%, respectively. The same notation applies to VInf-2: $n_2$ and $k_2$. Further, denote by $k'$ the minimum number of backup nodes required for a VInf with $n_1 + n_2$ critical nodes. With the super-linear property, it is guaranteed that $k' < k_1 + k_2$ for finite values. But depending on the relation of $k'$ with $k_1$ and $k_2$ individually, three regions of tradeoff can be expected:
\begin{description}
	\item[More links in both VInfs: $\bm{k' > k_1}$ and $\bm{k' > k_2}$.] Both VInfs need more backup nodes each with redundancy pooling, as illustrated in the previous example. This translates to additional costs of more redundant bandwidth used, and synchronization overhead if backup nodes are hot standbys, while reducing redundant computational resource.
	\item[No tradeoff: $\bm{k' = k_1 = k_2}$.] Both VInfs do not need more backup nodes each with redundancy pooling. This is the ideal case as the total number of backup nodes is halved through pooling, and there is no increase in redundant links nor synchronization overhead. However, this region is small and only exists for small $n$, indicating that there is not much opportunity to pool backup nodes in this way.
	\item[More links in one VInf: $\bm{k' = k_1}$ or $\bm{k' = k_2}$, not both.] This is an intermediate region between the prior two cases. One VInf will need more backup nodes in order to pool backup nodes with the other VInf, whereas the other VInf is not affected.
\end{description}

\begin{figure}
	\centering
	\includegraphics[width=.85\linewidth]{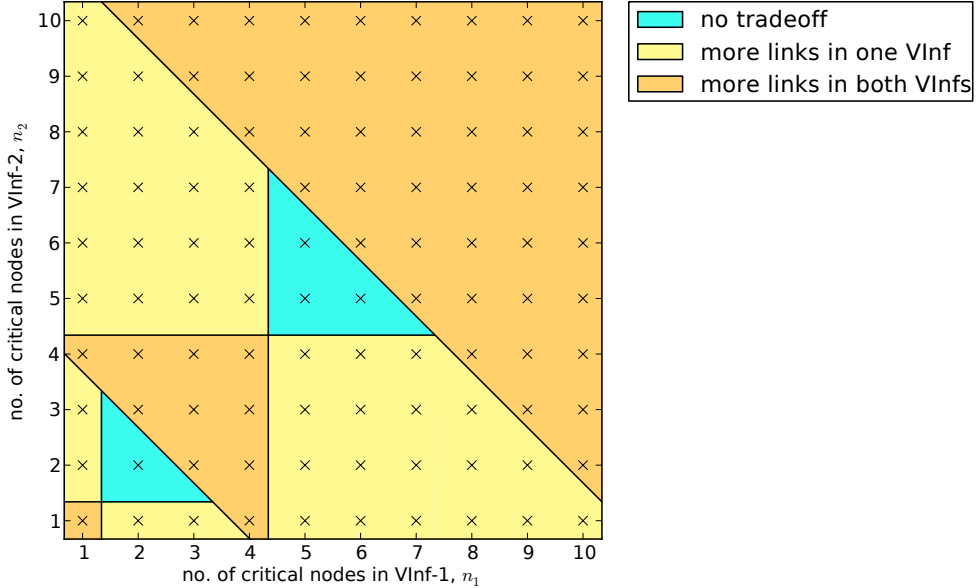}
	\caption{The regions of tradeoff between computational and bandwidth resource when pooling backup nodes of two arbitrary VInfs. The values are obtained from \autoref{fig:scaleNK} for the case of 99.999\% reliability and $p = 0.02$.}
	\label{fig:mapRshare}
\end{figure}


\subsection{Pooling to fill in the discrete gap} 
\label{sec:gapPool}

For this reason, we introduce Opportunistic Redundancy Pooling (ORP). This is a method to pool backup nodes such that there is no additional overhead on bandwidth (and synchronization, in the case of hot standbys). Another advantage with this method is that VInfs with different reliability guarantees can be pooled together. It makes use of the discrete steps of the curves as shown in \autoref{fig:scaleNK}. For example, in the case where $p  = 0.03$, a VInf with 29 critical nodes needs 7 backup nodes and the reliability evaluates to 99.999065093\%. Another VInf with one more critical node needs 8 backup nodes in order for the reliability to be maintained above the guarantee. In particular, the reliability for the latter case is 99.9998544522\%, which is much higher than the guarantee. It is also the reason why the number of backups is the same for $n \in [30,39]$. In the case of 30 critical nodes, the excess 0.0008544522\% reliability can be ``sacrificed'' to ``squeeze in'' other VInfs that require no more than 8 backup nodes\footnote{This means either the VInf needs lower reliability guarantee, has smaller number of critical nodes, has a skewed $f(x)$ that gives smaller $k$, or all of the above.}. Conversely, it can also be viewed that the residual excess from a few VInfs are pooled to reduce the number of backups one VInf needs.

\begin{figure}
	\centering
	\includegraphics[width=.6\linewidth]{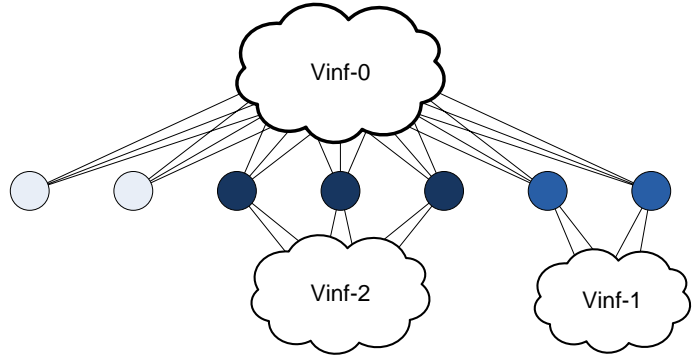}
	\caption{A way to pool backup nodes. This can be viewed in two ways: (i) a virtual infrastructure VInf-0 ``lends'' some of its backup nodes to other VInfs, and (ii) VInfs 1 and 2 collectively ``lend'' their backup nodes for VInf-0. The total number of backup nodes will be reduced by at most 50\%.}
	\label{fig:label}
\end{figure}

We use \autoref{fig:mapRshare} to explain ORP. Suppose there are $m+1$ VInfs which require $k_0, k_1, \ldots, k_m$ backup nodes for reliability guarantee of $r_0, r_1, \ldots, r_m$, respectively, and $k_0 \ge \sum_{i=1}^m k_i$. VInf-0 pools its backup nodes with VInf-$i$, $i > 0$, and each backup node backs up VInf-0 and at most one other VInf. Having this restriction allows us to keep the reliability guarantees of any VInf-$i$ satisfiable (i.e., the reliabilities of VInf-$i$ are no less than $r_i$ before and after pooling) and shifts the reliability computation to VInf-0. The decision criterion to pool the VInfs is then determining whether the new reliability of VInf-0 after pooling $r'_0$ is still greater than the required guarantee $r_0$. As described later, imposing this criterion supports an incremental evaluation of $r'_0$ when VInfs are newly admitted into the redundancy pool, or leave the redundancy pool. This also requires the assumption that the recovery protocol when activating backup nodes prioritize VInf-$i$ over VInf-0, and VInf-$i$ uses no more than $k_i$ backup nodes after recovery.

Define $z^\text{VInf}(k, y)$ as the probability that a total of $y$ nodes fail in a VInf with $k$ backup nodes, i.e.,
\begin{equation}
	z^\text{VInf}(k,y) = \sum_{x=0}^{\min(n, y)} {k \choose y-x} p^{y-x} (1-p)^{k-(y-x)} f^\text{VInf}(x) ,
\end{equation}
where $n$ is the number of nodes of that VInf. The reliability of VInf-0 after pooling is then
\begin{equation}
	\label{eqn:r0Reliability}
	\begin{split}
		r'_0 = 1 - \sum_{x=0}^{k'} & Pr(\text{$x$ of $k'$ backups are down or used by VInf-1, \ldots, VInf-$m$}) \times \\
		& Pr(\text{more than $k_0-x$ nodes fail from VInf-0 with $k_0-k'$ backups}) .
	\end{split}
\end{equation}
The first term is the probability mass function (pmf) of the sum of $m$ independent VInfs with $k_i$ backup nodes each. The pmf of each independent event $q^\text{VInf-$i$}(x)$ is
\begin{equation}
	q^\text{VInf-$i$}(x) = \begin{cases}
		z^\text{VInf-$i$}(k_i, x) & , 0 \le x < k_i \\
		1 - \displaystyle\sum_{y=0}^{k_i - 1} z^\text{VInf-$i$}(k_i, y) & , x = k_i \\
		0 & , \text{otherwise.}
	\end{cases}
\end{equation}
Convolving all $m$ pmfs give the first term of \eqref{eqn:r0Reliability} to be
\begin{equation}
	Q(x) = \mathcal{F}^{-1} \biggl( \prod_{i=1}^m \mathcal{F} \bigl( q^\text{VInf-$i$}(x) \bigr) \biggr) ,
\end{equation}
where $\mathcal{F}(\cdot)$ and $\mathcal{F}^{-1}(\cdot)$ is the Discrete Fourier Transform (DFT) and its inverse, respectively, of minimum length $k'$. It is, however, more convenient to keep the length to be at least $k_0$ so that more VInfs can be pooled in future without having to recompute $m$ DFTs again\footnote{For performance reasons, the length could be kept at $2^{\lceil \log_2 k_0 \rceil}$ to take advantage of Fast Fourier Transform algorithms.}. Then, \eqref{eqn:r0Reliability} simplifies to
\begin{equation}
	\label{eqn:r0SimpR}
	r'_0 = 1-\sum_{x=0}^{k'} Q(x) \biggl( 1 - \sum_{y=0}^{k_0-x} z^\text{VInf-0}(k_0-k', y) \biggr) .
\end{equation}
The time complexity to decide whether VInfs $1, \ldots, m$ can be pooled with VInf-0 is bounded by the $m$ DFTs, which evaluates to $O(mk \log k)$.

\begin{figure}
	\centering
	\includegraphics[width=.8\linewidth]{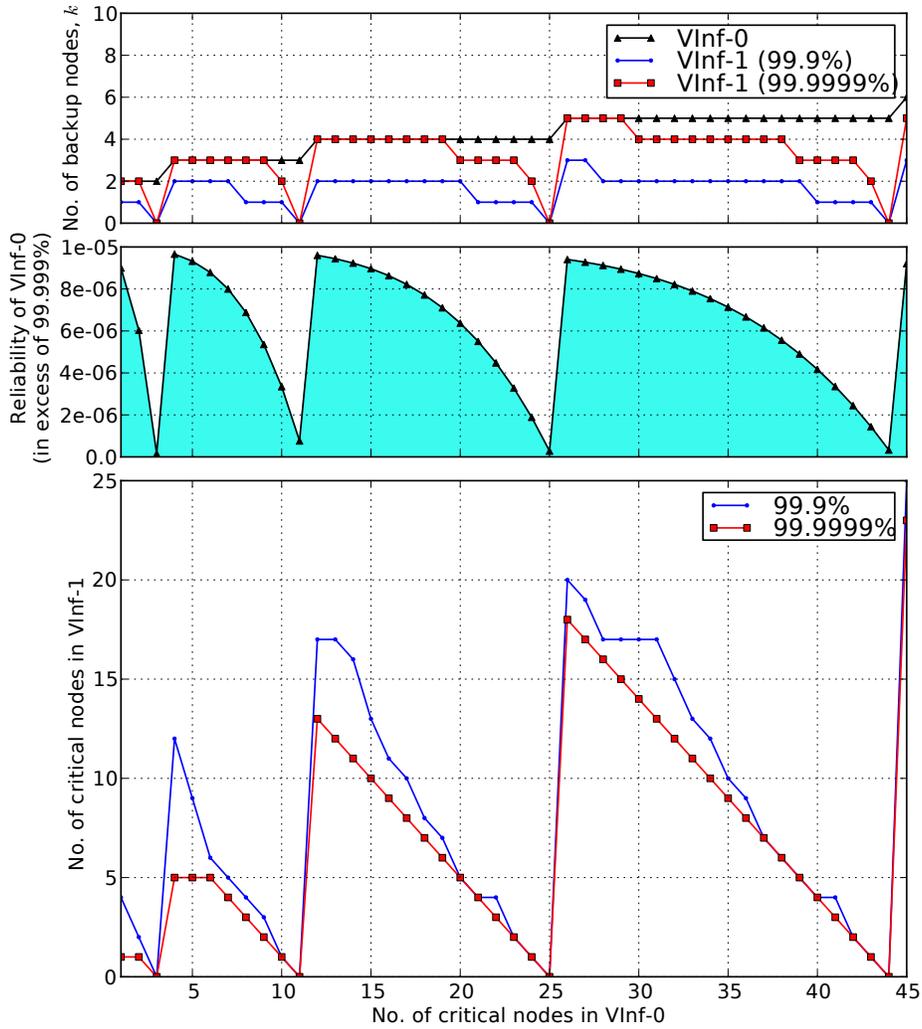}
	\caption{VInf-0 shares its $k_0$ backup nodes with VInf-1, where $p=0.01$. The lower plot shows the maximum number of critical nodes VInf-1 can have while reusing VInf-0's backup nodes for two cases of reliability guarantee, while maintaining VInf-0's reliability above a requirement of 99.999\%. The upper plot shows the corresponding number of VInf-0's backup nodes used by VInf-1, and the middle plot shows the excess reliability from VInf-0 that has been tapped upon for VInf-1. Hence, the shape of the curves are similar.}
	\label{fig:gapShare}
\end{figure}

Refer to \autoref{fig:gapShare} for a graphical explanation. In this figure, we show the case where VInf-0 shares some of its backup nodes with one other VInf under the current pooling scheme. We show two cases of VInf-1, one with reliability requirement of 99.9\% and another with 99.9999\%. VInf-0's reliability requirement is kept at 99.999\%, in between those two cases. For simplicity, failure rates of all critical nodes are set to be independent and uniform at $p=0.01$.

In the top plot, the number of backup nodes $k_0$ for VInf-0 increases in a step-wise fashion similar to \autoref{fig:scaleNK} as the number of critical nodes increase. The step-wise increase in $k_0$ creates opportunities for VInf-1 to reuse some of the backup nodes VInf-0 have since there is much excess in VInf-0's relibability prior to pooling, as shown in the shaded area in the middle plot. The lower plot shows the maximum number of critical nodes VInf-1 can have in order to pool backup nodes with VInf-0, and the respective number of backup nodes reused are shown in the top plot. Since VInf-1 is essentially utilizing VInf-0's excess reliability, the peaks and valleys of the curve in the lower plot follows that of the middle plot. It can be observed, too, that the size of VInf-1 is significant as compared to that of VInf-0, and the number of backup nodes conserved is up to 50\%.

The advantage of this pooling scheme can be summarized as follows.
\begin{description}
	\item[No tradeoff.] The shared VInfs use only the minimum number of backups as though there is no pooling. Hence, there are no additional links than required.
	\item[Does not diminish for large $\bm{n}$.] The pooling scheme makes use of the excess reliability arise from discrete steps in $k$. Hence, there will always be gaps that can be filled with VInfs that need smaller $k$.
	\item[Pooling over different $\bm{r}$.] This scheme allows for VInfs of arbitrary reliability requirements to be pool together.
	\item[Flexibility in adding VInfs.] A new VInf-$m+1$ can always be added into VInf-0's pool of backup nodes so long as VInf-0's new reliability computed from \eqref{eqn:r0SimpR} is still above the required guarantee. As mentioned previously, all previous $m$ DFTs can be stored to speed up this admission control procedure by a factor of $O(m)$ to $O(k \log k)$.
	\item[Flexibility in removing VInfs.] VInf-$i$ can always be removed from the pool since VInfs other than VInf-0 are unaffected, and VInf-0 will have its effective reliability increased. Conversely, if VInf-0 is to be removed, the other VInfs simply reclaim the respective backup nodes as their own, which can be pooled with new incoming VInfs.
\end{description}

There can be other ways of extending this pooling scheme. For example, VInf-$i$ shares its backup nodes to another lower layer of VInfs, and recursively add new layers. Another example is to have a new VInf-$i$ share across two VInf-0s. We do not study these cases due to three major reasons: (i) there is a compromise of flexibility in dynamically adding and removing VInfs, (ii) the gains may be marginal as compared to the initial sharing, and (iii) the time complexity to re-evaluate of the reliabilities of all pooled VInfs may be high.

Pooling VInfs this way is opportunistic, since we do not predict the statistics of future incoming VInfs. In general, however, VInf-0 should be the one with the largest number of backup nodes as this allows for more degrees of freedom in choosing other VInf-$i$ to pool backup nodes with.



\section{Preserving Virtual Infrastructure} 
\label{sec:preserve}

The virtual infrastructure $G=(V,E)$ has to be preserved when backup nodes resume execution of failed critical nodes. This translates to ensuring that every backup node has guaranteed bandwidth to all neighbors of all critical nodes.

\subsection{Minimum Redundant Links} 
\label{sec:optGraph}

It is possible to minimize the total number of links while providing redundancy for a VInf. Harary and Hayes \cite{HararyF96} studied the problem of constructing a new graph $G' = (N \cup B, E')$ with minimum links (i.e., $|E'|$) such that upon the removal of any $k = |B|$ nodes (i.e., $k$ node failures), the resultant graph always contain the original VInf $G$.

However, this poses a limitation since the result only guarantees graph isomorphism and not equality. In other words, there may be a need to physically swap remaining VMs while recovering from some failure in order return to the original infrastructure $G$. Recovery may then be delayed or require more resources are available for such swapping operations.

We illustrate this using the example in \autoref{fig:ex}. The new graph $G'$ in \autoref{fig:ex:opt} is obtained using Theorem~2 in \cite{HararyF96}\footnote{Theorem~2 works only on unweighted graphs. In the subsequent step, we obtained the minimum link weights through exhaustive iteration.}. If nodes $\mathsf{b_1}$ and $\mathsf{c_3}$ fail, the only way to recover is to have $\mathsf{c_3}$ to be in the position of the current $\mathsf{c_2}$ and backup node $\mathsf{b_2}$ assuming the role of $\mathsf{c_2}$. The problem here is that node $\mathsf{c_2}$ is not a backup for node $\mathsf{c_3}$ in the first place! Hence, the recovery procedure is lengthened to two steps: (i) recover node $\mathsf{c_3}$ at backup node $\mathsf{b_2}$, and then (ii) swap nodes $\mathsf{c_2}$ and $\mathsf{c_3}$. This problem will always arise no matter where the backup nodes are placed in $G'$.

Furthermore, deriving optimal graphs $G'$ with minimal links any general graph $G$ has exponential complexity. To the best of the authors' knowledge, optimal solutions are found only on regular graphs such as lines, square-grids, circles, and trees \cite{HararyF96,AjtaiM92,DuttS97}.

\begin{figure}
	\centering
	\subfloat[VInf with three critical nodes.\label{fig:ex:vinf}]{%
		\makebox[.22\linewidth][c]{\includegraphics{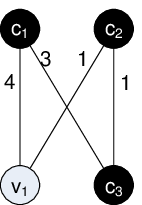}}}\hfil
	\subfloat[Least number of links for VInf with two backups.\label{fig:ex:opt}]{%
		\makebox[.32\linewidth]{\includegraphics{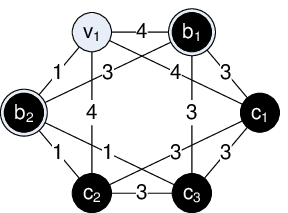}}}\hfil
	\subfloat[Addition of redundant (bold) links $L$ to VInf for two backups.\label{fig:ex:exlinks}]{%
		\makebox[.38\linewidth]{\includegraphics{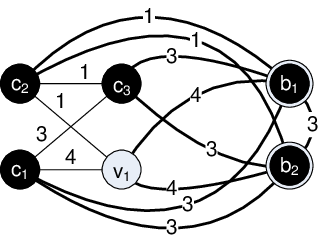}}}
	\caption{~\protect\subref{fig:ex:opt} and~\protect\subref{fig:ex:exlinks} show two different ways of preserving VInf~\protect\subref{fig:ex:vinf} to account for two redundant nodes. While the former utilizes 1 less link and 1 less bandwidth unit, it requires some nodes to be rearranged in the recovery phase.}
	\label{fig:ex}
\end{figure}


\subsection{Redundant Links without Swapping} 
\label{sec:reduce}

To overcome the aforementioned limitations, we choose to use the following set of redundant links, at the expense of incurring more redundant resources. Formally, the set of redundant links $L$ that are added to $G$ is
\begin{align}
	L &= L^1 \cup L^2 \\
	L^1 &= \{ (b,u) \,|\, \exists (c,u) \in E, \forall b \in B, u \in N, c \in C \} \\
	L^2 &= \{ (a,b) \,|\, \forall a, b \in B \} ,
\end{align}
where $B$ and $C$ are the sets of backup and critical nodes, respectively. $L$ is a union of two sets of links. The first set $L^1$ connects all backup nodes to all neighbors of all critical nodes, and the second set $L^2$ interconnects all backup nodes since two critical nodes may be neighbors of each other and may fail simultaneously. The latter set can be omitted if there are no links between any critical nodes, i.e.,
\begin{equation}
	L^2 = \emptyset \;\Leftrightarrow\; (c,d) \notin E, \forall c, d \in C .
\end{equation}

\autoref{fig:ex:vinf} and \autoref{fig:ex:exlinks} illustrates an example in the expansion of the edges of a VInf for the added redundancy. The former figure shows the original VInf of four nodes, in which three of them are critical. In providing two additional backup nodes for reliability, more links are added to the primary VInf so that each backup node has the full bandwidth resource as any critical node, as seen in the latter figure.

Adding $L$ to $G$ is much more straightforward and does not suffer from the aforementioned swapping / rearrangement problem. More importantly, when pooling backups across VInfs, redundant links $L$ can be added without affecting other existing VInfs which use the same backup nodes.

The number of redundant links in $L$ may seem large ($O(nk + k^2)$, where the first and second terms are the number of links in $L^1$ and $L^2$, respectively). But, the amount of physical bandwidth reserved can be reduced while embedding them. This is because not all links will be in use at the same time. A simple example in \autoref{fig:overlap} can illustrate this. The small VInf in \autoref{fig:overlap:vinf} consists of two nodes and a link between them with 1 bandwidth unit. One of the nodes is critical and is backed up by two redundant nodes. Suppose that due to limited available compute capacities, the physical deployment of the virtual nodes is that of \autoref{fig:overlap:map}. If the redundant links are embedded verbatim into the physical infrastructure, the link $(\mu_2,\mu_4)$ would require 2 units. However, it is only necessary to reserve 1 unit on this link, since at most 1 backup node will be in use at any time.

\begin{figure}
	\centering
	\subfloat[\label{fig:overlap:vinf}]{\includegraphics{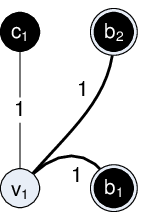}}\hfil
	\subfloat[\label{fig:overlap:map}]{\includegraphics{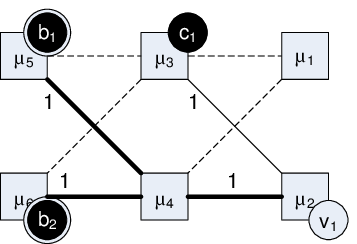}}
	\caption{On the left is a VInf with redundant links to two backup nodes. Suppose that the virtual nodes (circles) are deployed at the physical nodes (squares) on the right. Unutilized links are dotted lines and redundant links are in bold. Only 1 unit of bandwidth needs to be reserved on link $(\mu_2, \mu_4)$.}
	\label{fig:overlap}
\end{figure}

The solution is more complex in a scenario with a slightly larger VInf. \autoref{fig:exembed} is an example of embedding the VInf from \autoref{fig:ex:exlinks} with three different placements of the backup nodes.  Similarly, at most four out of the nine redundant links will be utilized at any time, i.e., two critical node failures, so the physical bandwidth reservation should only need to cater for all cases where the backup nodes recover two critical nodes. As can be seen in the figure, placement of the backup nodes and determining the minimal physical bandwidth on the set of redundant links is a complex problem. The amount of physical bandwidth required depends on the ways in which a redundant link can be ``overlapped'' with other redundant links, which in turn, depend on the physical location of the backup nodes. Due to this highly coupled relation between backup nodes and redundant links, evaluating placement of virtual nodes is much more complex than that of embedding a VInf without any redundancy.

\begin{figure}
	\centering
	\subfloat[\label{fig:ex:m1}]{\includegraphics[viewport=0 -6.78 72.24 114.96] {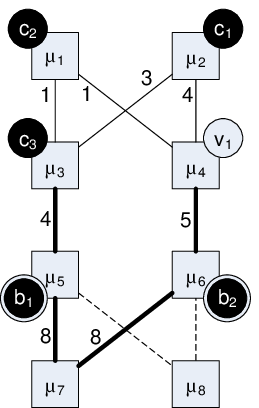}}\hfil
	\subfloat[\label{fig:ex:m2}]{\includegraphics{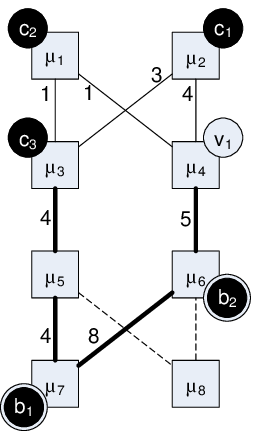}}\hfil
	\subfloat[\label{fig:ex:m3}]{\includegraphics{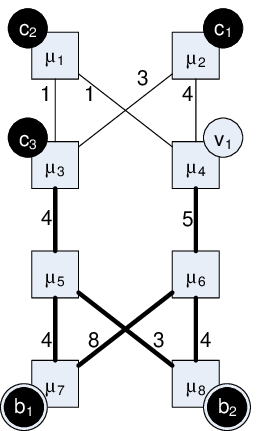}}%
	\caption{Some ways of embedding a VInf of \protect\autoref{fig:ex:vinf} with two backup nodes in a physical infrastructure. Circular and square nodes represent virtual and physical nodes, respectively, and dashed lines represent unutilized physical links. The mass of redundant links which connect the backups to the VInf (see \protect\autoref{fig:ex:exlinks}), can be minimally embedded since not all bold links are in use at any instance.}
	\label{fig:exembed}
\end{figure}



\section{Resource Allocation: a mixed integer programming problem}
\label{sec:resAlloc}

To address the tight coupling between the virtual nodes and links of a VInf, we use a joint node and link allocation approach for the resource allocation problem. In \cite{VineYard09}, a multi-commodity flow (MCF) problem is formulated to jointly allocate nodes and links of a VInf to physical infrastructure. We adapt the MCF problem with additional constraints to solve for the minimum bandwidth used on redundant links.

The MCF problem is a network flow problem where the objective is to assign flows between sources and destinations in a network. The virtual links of a VInf can be seen as flows between virtual nodes. To determine the actual locations of the virtual nodes, the physical network is appended with virtual nodes and ``mapping'' links connecting every virtual node to their possible physical locations, i.e., links $(u, \mu)$ for all virtual nodes $u$ and physical nodes $\mu$ are appended to the set of physical links $\mc{E}$. The first physical node in which a flow passes through will be the location of the virtual node of that flow (link). Additional constraints are added to the MCF to ensure that a virtual node has only one physical location.

\subsection{Mapping Constraints} 
\label{sec:mapConst}

Denote by $\rho_{u\mu}$ a binary variable that represents the mapping between a physical node and a virtual node, i.e., $\rho_{u\mu} = 1$ if a virtual node $u$ is mapped onto physical node $\mu$, 0 otherwise. The mapping constraints can then be expressed by the following equations:
\begin{gather}
	\label{eqn:oneNodeMap}
	\sum_{\mu \in \mc{N}} \rho_{u\mu} = 1 , \quad \forall u \in N \cup B , \\
	\label{eqn:noNodeOverlap}
	\sum_{u \in N \cup B} \rho_{u\mu} \le 1 , \quad \forall \mu \in \mc{N} .
\end{gather}
The two equations ensure that there is exactly one hosted virtual node among all physical nodes, and that a physical node can host at most one virtual node, respectively. More conditions can be added. For example, if the VInf is to reuse backup nodes from an existing pool of backups in physical nodes, then the mapping variables $\rho_{u\mu} = 1$ for every existing virtual backup and physical node pair $(u,\mu)$. The mapping restrictions can be extended further to three other scenarios.

\begin{description}
	\item[Location exclusion.] Some physical nodes may not be used with a VInf. For example, in the case where backup nodes are pooled, locations of the new virtual nodes should not be the same as any of the existing VInfs with the same backup nodes. This is necessary for the reliability in \eqref{eqn:r0SimpR} to be valid. Then, for the set of physical nodes $\mc{N}'$ that should be excluded,
	\begin{equation}
		\rho_{u\mu} = 0, \quad \forall u \in N,  \forall \mu \in \mc{N}' .
	\end{equation}
	\item[Location preference.] Some physical locations may be preferred for some virtual nodes. For example, load balancers, firewalls, ingress and egress routers, or could be as simple as physical proximity. Then, for a set of preferred physical locations $\Phi(u)$ for a virtual node $u$,
	\begin{equation}
		\rho_{u\mu} = 0, \quad \forall \mu \notin \Phi(u) .
	\end{equation}

	\item[Location separation.] In order to avoid correlated failures, critical nodes can be placed, for example, in different racks. For a set of physical nodes on the same rack $\mc{N}'$, the following constraint can be appended:
	\begin{equation}
		\sum_{\mu \in \mc{N}'} \sum_{u \in N} \rho_{u\mu} \le 1 .
	\end{equation}
\end{description}


\subsection{Resource Constraints} 
\label{sec:resConst}

Compute capacity constraints on the physical nodes can be easily captured through the mapping variables, i.e.,
\begin{equation}
	\rho_{u\mu} \gamma_u \le \Gamma_\mu , \quad \forall u \in N \cup B, \forall \mu \in \mc{N'} .
\end{equation}
The set of backup nodes $B$ may be omitted from the above if backup nodes are reused from a redundancy pool, and the compute capacity reserved is already more than the maximum of that of the new critical nodes, i.e, $\max_{c \in C} \gamma_c$. Otherwise, the above constraints may be included with the RHS equals to the deficit compute capacity.

Bandwidth constraints and link mappings are derived from the MCF problem. As mentioned earlier, a virtual link $(u,v)$ between two virtual nodes $u$ and $v$ can be seen as a flow between source and destination under the MCF problem. Due to the inclusion of redundant links, we define four types of flows:
\begin{itemize}
	\item Virtual links $E$: flows between two virtual nodes $u, v \in N$. The amount of bandwidth used on a link $(i,j)$ is denoted by $\ell^{uv}_E [ij]$.
	\item $L^1$ flows, i.e., flows between a backup node $a \in B$ and a neighbor $v \in N$ of some critical node. The amount of bandwidth on $L^1$ flows depends on which critical node $c$ the backup node $a$ recovers, and how much bandwidth can be ``overlapped'' across different failure scenarios. As such, we denote by $\ell_{L^1}^{acv} [ij]$ the amount of bandwidth used on a link $(i,j)$ when such a recovery occurs. This allows us to model the overlaps between redundant flows.
	\item Aggregate flows on a link between redundant nodes $B$ and the neighbor $v \in N$ of some critical node. This reflects the actual amount of bandwidth reserved after overlaps on link $(i,j)$. We denote this by $\ell_o^v[ij]$.
	\item $L^2$: flows between two backup nodes $a, b \in B$. The amount of bandwidth used on a link $(i,j)$ is denoted by $\ell_{L^2}^{ab}[ij]$. Unlike $L^1$ flows, we do not model any possible overlapping of these redundant links with $L^1$. This is to ensure the $L^2$ flows can be easily reused when sharing with other VInfs.
\end{itemize}

The flows $E$, $L^1$ and $L^2$ follow the conservation of flow equations. At the source, the respective flow constraints are:
\begin{gather}
	\sum_{\mu \in \mc{N}} \ell^{uv}_E [u\mu] - \ell^{uv}_E [\mu u] = \lambda_{uv} , \quad \forall (u,v) \in E , \\
	\sum_{\mu \in \mc{N}} \ell^{acv}_{L^1} [a\mu] - \ell^{acv}_{L^1} [\mu a] = \lambda_{cv} , \quad \forall (a,v) \in L^1, \forall c \in C , \\
	\sum_{\mu \in \mc{N}} \ell^{ab}_{L^2} [a\mu] - \ell^{ab}_{L^2} [\mu a] = \max_{\substack{c,d \in C, \\ (c,d) \in E}} \lambda_{cd} , \quad \forall (a,b) \in L^2 .
\end{gather}
The above equations state that the total flow out of the source nodes $u, a$ must be equal to the flow demand. For links in $E$ and $L^1$, the flow demand is the bandwidth requirement of that virtual link. For links in $L^2$, it is the maximum bandwidth requirement between any two critical nodes. In the case where backup nodes are pooled, we can omit reserving bandwidth for $L^2$ flows (and hence the $L^2$ flow conservation constraints) unless the existing reservation is insufficient. In that case, only the excess need to be reserved.

The flow conservation constraints at the destination is similar to that of the source:
\begin{gather}
	\sum_{\mu \in \mc{N}} \ell^{uv}_E [v \mu] - \ell^{uv}_E [\mu v] = -\lambda_{uv} , \quad \forall (u,v) \in E , \\
	\sum_{\mu \in \mc{N}} \ell^{acv}_{L^1} [v \mu] - \ell^{acv}_{L^1} [\mu v] = -\lambda_{cv} , \quad \forall (a,v) \in L^1, \forall c \in C , \\
	\sum_{\mu \in \mc{N}} \ell^{ab}_{L^2} [b \mu] - \ell^{ab}_{L^2} [\mu b] = -\max_{\substack{c,d \in C, \\ (c,d) \in E}} \lambda_{cd} , \quad \forall (a,b) \in L^2 ,
\end{gather}
except that the flow demand is negative as the flow direction is into that node.

At the physical intermediate nodes $\mu$, the flow conservation constraints for flows $E$, $L^1$ and $L^2$, respectively, are
\begin{gather}
	\sum_{i \in N \cup \mc{N}} \ell^{uv}_E [i \mu] - \ell^{uv}_E[\mu i] = 0, \quad \forall \mu \in \mc{N}, \forall (u,v) \in E , \\
	\sum_{i \in N \cup \mc{N}} \ell^{acv}_{L^1} [i \mu] - \ell^{acv}_{L^1}[\mu i] = 0, \quad \forall \mu \in \mc{N}, \forall (a,v) \in {L^1}, \forall c \in C , \\
	\sum_{i \in N \cup \mc{N}} \ell^{ab}_{L^2} [i \mu] - \ell^{ab}_{L^2}[\mu i] = 0, \quad \forall \mu \in \mc{N}, \forall (a,b) \in {L^2} ,
\end{gather}
which states that sum of all flows into and out of the physical intermediate node $\mu$ must be zero.

The actual amount of bandwidth reserved on a physical link $(\mu,\nu)$ after considering overlaps of $L^1$ flows can be captured by the following constraint:
\begin{equation}
	\sum_{\substack{a \in B, c \in C' \\ (c,v) \in E}} \ell^{acv}_{L^1} [\mu \nu] \le \ell^{v}_o [\mu \nu] , \quad \forall (\mu,\nu) \in \mc{E}, \forall (a,v) \in L^1, \forall C' \in C, |C'| \le k .
\end{equation}
The subset of critical nodes $C'$ represent a possible failure scenario where at most $k$ critical nodes fail. The RHS captures the maximum bandwidth used in those cases. Unfortunately, the caveat here is that this leads to an exponential expansion of constraints when $k$ goes large. The impact of overlapping redundant links, however, is significant as can be observed in the simulations in the next section.

The last set of constraints defines the link capacity on physical links $(\mu, \nu)$ and mapping links $(u, \mu)$.
\begin{gather}
	\begin{split}
		\sum_{(u,v) \in E} \bigl[ \ell^{uv}_E [\mu \nu] + \ell^{uv}_E [\nu \mu] \bigr] + \sum_{v \in N} \bigl[ \ell^{v}_o [\mu \nu] + \ell^{v}_o [\nu \mu] \bigr] \\
		\mbox{} + \sum_{a,b \in B} \bigl[ \ell^{ab}_{L^2} [\mu \nu] + \ell^{ab}_{L^2} [\nu \mu] \bigr] \le \Lambda_{\mu \nu} , \quad \forall (\mu, \nu) \in \mc{E} ,
	\end{split} \\
	\begin{split}
		\sum_{(u,v) \in E} \bigl[ \ell^{uv}_E [u \mu] + \ell^{uv}_E [\mu u] \bigr] + \sum_{v \in N} \bigl[ \ell^{v}_o [u \mu] + \ell^{v}_o [\mu u] \bigr] \\
		\mbox{} + \sum_{a,b \in B} \bigl[ \ell^{ab}_{L^2} [u \mu] + \ell^{ab}_{L^2} [\mu u] \bigr] \le \Lambda \rho_{u \mu} , \quad \forall (u, \mu) \in N \times \mc{N} .
	\end{split}
\end{gather}
The first constraint accounts for all flows on a physical link $(\mu,\nu)$ in {\em both} directions, and the total should be less than the physical remaining bandwidth $\Lambda_{\mu \nu}$. For the second constraint, the LHS is similar in that it accounts for all flows on the mapping link $(u, \mu)$, and $\Lambda$ is an arbitrary large constant. This way, the mapping variable $\rho_{u \mu}$ will be set to 1 if there are non-zero flows on that link in either direction.

\subsection{Objective Function and Approximation} 
\label{sec:object}

We seek to minimize the amount of resources used for a VInf. The objective function of the adapted MCF is then
\begin{multline}
	\min\quad \sum_{\mu \in \mc{N}} \alpha_\mu \sum_{u \in N \cup B} \rho_{u \mu} \gamma_\mu
	+ \sum_{(\mu,\nu) \in \mc{E}} \beta_{\mu\nu} \times 
\Biggl[ \sum_{v \in N} \bigl[ \ell^v_o [\mu \nu] + \ell^v_o [\nu \mu] \bigr] \\
		 \mbox{} + \sum_{a,b \in B} \bigl[ \ell_{L^2}^{ab}[\mu \nu] + \ell_{L^2}^{ab}[\mu \nu] \bigr] + \sum_{(u,v) \in E} \bigl[ \ell^{uv}_E [\mu \nu] + \ell^{uv}_E [\nu \mu] \bigr] \Biggr]
\label{eqn:objective}
\end{multline}
where $\alpha_\mu$ and $\beta_{\mu \nu}$ are node and link weights, respectively. To achieve load balancing across time, the weights can be set as $\frac{1}{\Gamma_\mu + \epsilon}$ and $\frac{1}{\Lambda_{\mu \nu} + \epsilon}$, respectively.

The variables of this linear program are the non-zero real-valued flows $\ell$ and the boolean mapping variables $\rho$. The presence of the boolean variables turns the linear program into a NP-Hard problem. An alternative is to relax the boolean variables to real-valued variables, obtain an approximate virtual node embedding by picking a map with the largest $\rho_{u \mu}$, and re-run the same linear program with the virtual nodes assigned to obtain the link assignments \cite{VineYard09}.




\section{Evaluation}
\label{sec:eval}

In this section, we evaluate the performance of the system when allocating resources with and without redundancy pooling and redundant bandwidth reduction, labeled {\bf share} and {\bf noshare} respectively. In particular, we focus on the resource utilization of the physical infrastructure and the admission rates of VInf requests. We further compare that to a system where VInfs do not have reliability requirement, i.e. zero redundancy (labeled {\bf nonr}), as a baseline to gauge the additional amount of resources consumed for reliability.

Our simulation setup is as follows. The physical infrastructure consists of 40 compute nodes with capacity uniformly distributed between 50 and 100 units. These nodes are randomly connected with a probability of 0.4 occurring between any two nodes, and the bandwidth on each physical link is uniformly distributed between 50 and 100 units. VInf requests arrive randomly over a timespan of 800 time slots and the inter-arrival time is assumed to follow the Geometric distribution at a rate of 0.75 per time slot. The resource lease times of each VInf follows the Geometric distribution as well at a rate of 0.01 per time slot. A high request rate and long lease times ensures that the physical infrastructure is operating at high utilization. Each VInf consists of nodes between 2 to 10, with a compute capacity demand of 5 to 20 per node. Up to 90\% of these nodes are critical and all failures are independent with probability 0.01. Connectivity between any two nodes in the VInf is random with probability 0.4, and the minimum bandwidth on any virtual link is 10 units. There are two main sets of results: (i) scaling the maximum bandwidth of a virtual link from 20 to 40 units while reliability guarantee of every VInf is $99.99\%$, and (ii) scaling the reliability guarantee of each VInf from 99.5\% to 99.995\% while the maximum bandwidth of a virtual link is 30 units. A custom discrete event simulator written in Python is used to run this setup on the Amazon EC2 platform\cite{AmazonEC2}, and the relaxed mixed integer programs are solved using the open-source CBC solver~\cite{coinCBC}.

\begin{figure*}
	\centering
	\newcommand\figsheight{1.71in}
	\newcommand\shorten{\vspace*{-8pt}}
	\subfloat[]{%
		\includegraphics[height=\figsheight]{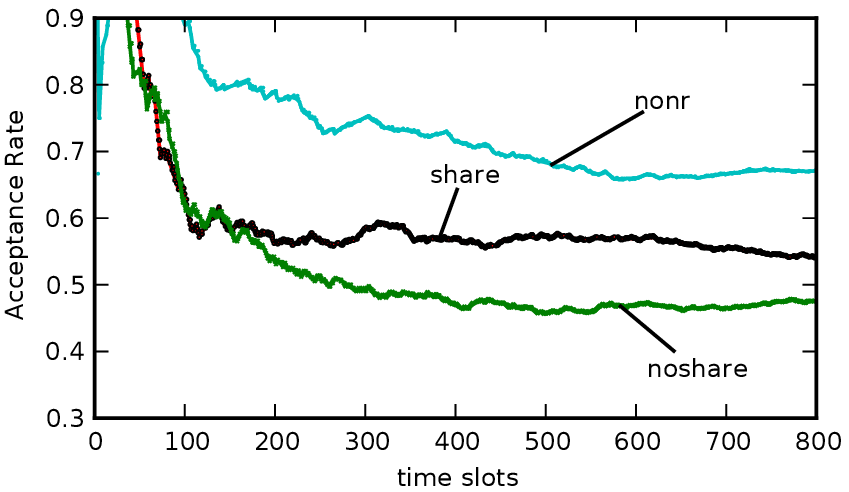}} \hfil
	\subfloat[]{%
		\includegraphics[height=\figsheight]{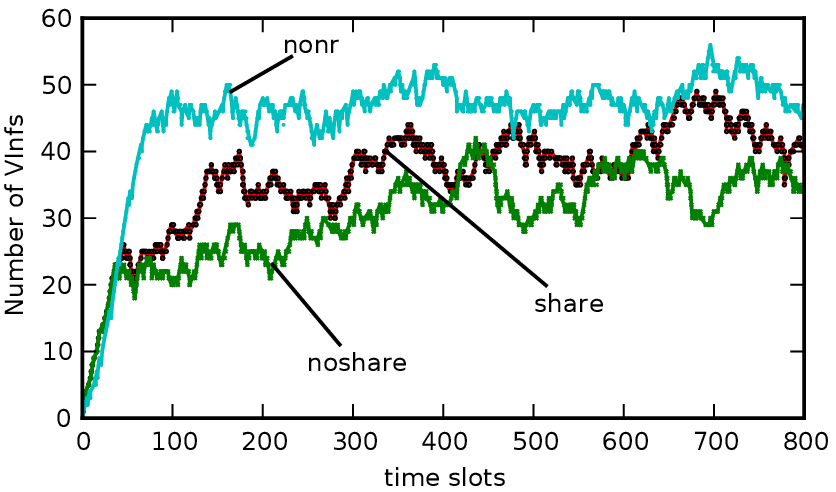}} \shorten\\
	\subfloat[]{%
		\includegraphics[height=\figsheight]{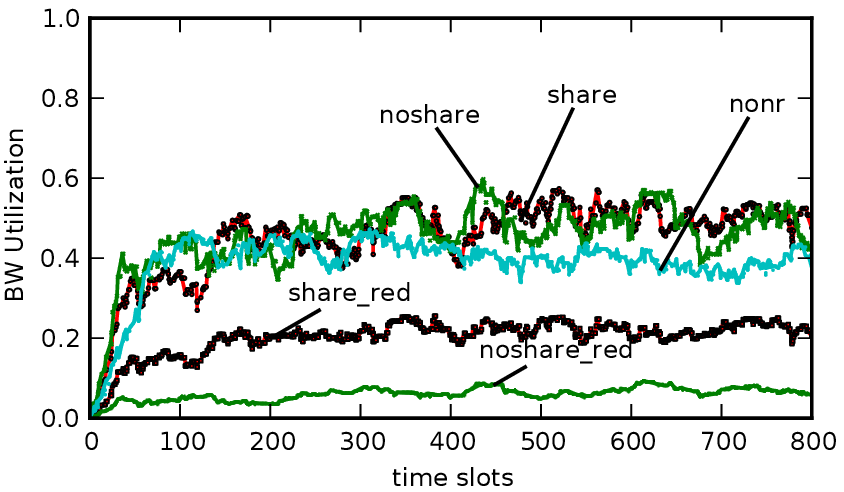}} \hfil
	\subfloat[]{%
		\includegraphics[height=\figsheight]{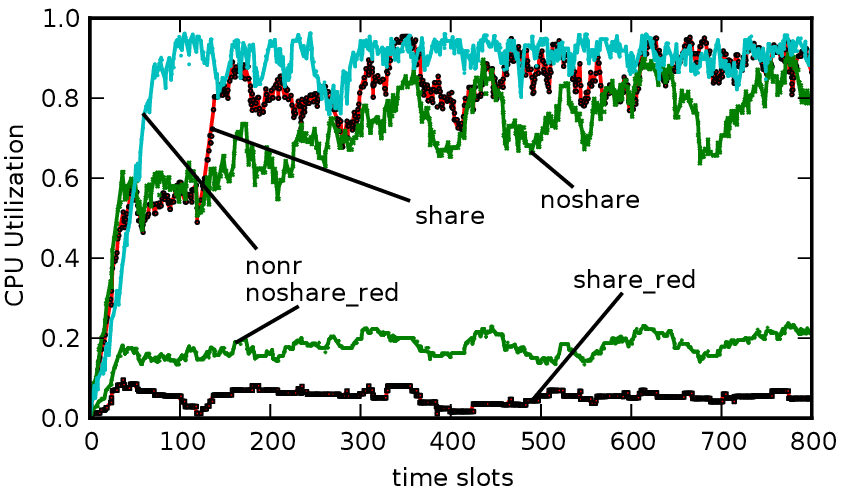}} \shorten\\
	\caption{State of the physical infrastructure over time in a single simulation instance where $r=99.99\%$ and max bandwidth per virtual link is 35.}
	\label{fig:compareShare}
\end{figure*}

\autoref{fig:compareShare} shows the acceptance rate, number of VInfs admitted, total bandwidth and CPU utilization (including redundancies) over time in a single simulation instance where reliability guarantee is 99.99\% and maximum bandwidth per virtual link is 35 units. The physical infrastructure is saturated by time later than 300s. As expected, the acceptance rate and the VInfs occupancy of the three cases {\bf nonr}, {\bf share} and {\bf noshare} are decreasing in that order, indicating that {\bf share} is more efficient in utilizing physical resources than {\bf noshare}. It can be seen that the increase in admitted VInfs is much slower in {\bf noshare} than the other two, and operates with lower resource utilization (especially for CPU). This indicates that {\bf noshare} is highly inefficient --- expansion of redundant backups and links without pooling have led to larger granularity in VInf resource requests. In terms of CPU utilization, redundant nodes in {\bf share} consume less resource (suffix {\sf \_red}) than that in {\bf noshare} despite admitting more VInfs, due to redundancy pooling. As for bandwidth utilization, {\bf share} do not use more bandwidth than {\bf noshare} even though more VInfs are admitted in the former case. We can observe later in \autoref{fig:compareRBW} that the bandwidth utilization is actually smaller for {\bf share}. This is so even though much higher bandwidth is dedicated for redundancy (suffix {\sf \_red}) in {\bf share} than {\bf noshare}. In the latter case, the redundant links use less bandwidth due to lower acceptance rates for VInfs with critical nodes. This effect can be seen in \autoref{fig:rejProf} for the same parameters over 10 instances.

\begin{figure}
	\centering
	\includegraphics[width=.6\linewidth]{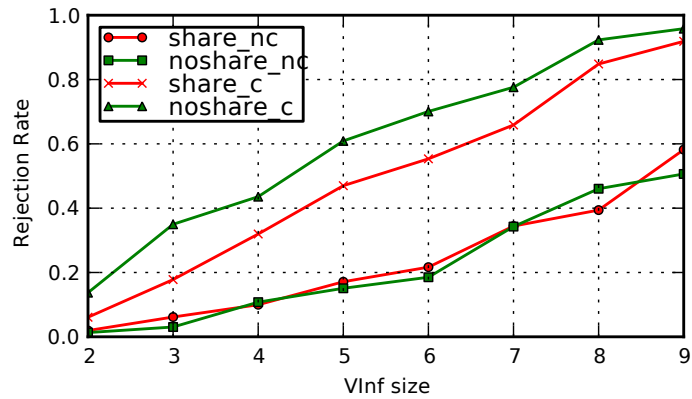}
	\caption{Rejection rates of VInfs according to the sizes. Suffix {\sf c} refers to VInfs with critical nodes ($r = 99.99\%$) and maximum bandwidth 35. {\sf nc} refers to those without.}
	\label{fig:rejProf}
\end{figure}

\autoref{fig:compareRBW} shows the mean performance of the three cases {\bf share}, {\bf noshare} and {\bf nonr} across 10 simulation runs. {\bf noshare} has the least acceptance rate and VInf occupancy, and more backup nodes than {\bf share}, which is able to pool redundancies and efficiently reuse backups. The impact of overlapping redundant links can be seen in \autoref{fig:compareRBW:bBk}. As bandwidth demands of VInf increases, {\bf noshare} rejects VInfs that have larger number of critical nodes, resulting in a sharper drop in backup nodes than {\bf share}. The latter conserves more bandwidth and is able to admit larger sized VInfs. In terms of resource utilization, similar trends as that of \autoref{fig:compareShare} can be observed across all reliability guarantees and maximum virtual link bandwidths.

In summary, increasing redundant nodes and expanding a VInf with backup links leads to VInfs with larger granularity. If the physical infrastructure admits these expanded VInfs verbatim as in the case of {\bf noshare}, much inefficiencies can occur. VInfs that have more nodes, bandwidth, or higher reliability requirement (or all of them) get expanded much larger than {\bf share}, leading to more rejections and losses in revenue. Smaller VInfs, especially those with no critical nodes, are more readily admitted and it is almost impossible for larger VInfs to be admitted. Although more CPU and bandwidth are utilized in the {\bf noshare} case, there is substantially less VInfs than {\bf share} (as much as 24\%) present in the physical infrastructure. This is so even where more bandwidth is dedicated for redundant links in {\bf share} as more VInfs with more critical nodes get admitted. In comparison to the case where no reliability is guaranteed ({\bf nonr}), the number of VInfs that can be admitted dropped by at most 20\% and the largest drop in acceptance rate goes from 65\% to 51\% when compared to {\bf share}. When compared to {\bf noshare}, the figures are 38\%, and from 65\% to 41\% respectively. Hence, the resources required for provisioning reliability is quite significant.

\newgeometry{top=.7in,bottom=1.2in,left=.72in,right=.72in}
\begin{figure*}
	\newcommand\figsheight{1.5in}
	\newcommand\shorten{\vspace*{-8pt}}
	\begin{minipage}{.48\linewidth}
		\raggedleft
		\subfloat[]{%
			\includegraphics[height=\figsheight]{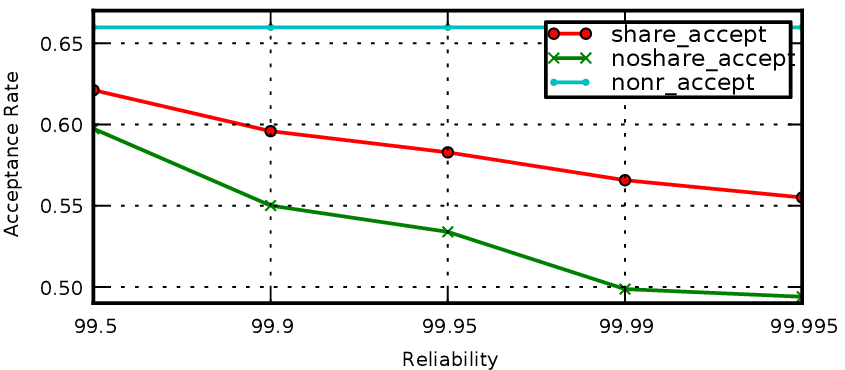}} \shorten \\
		\subfloat[]{%
			\includegraphics[height=\figsheight]{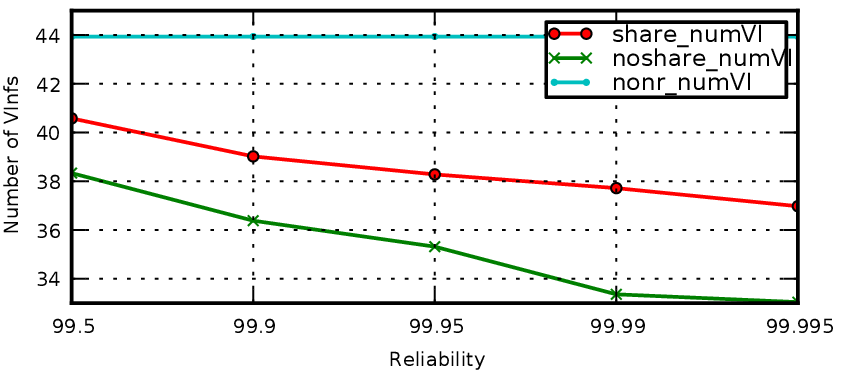}} \shorten \\
		\subfloat[]{%
			\includegraphics[height=\figsheight]{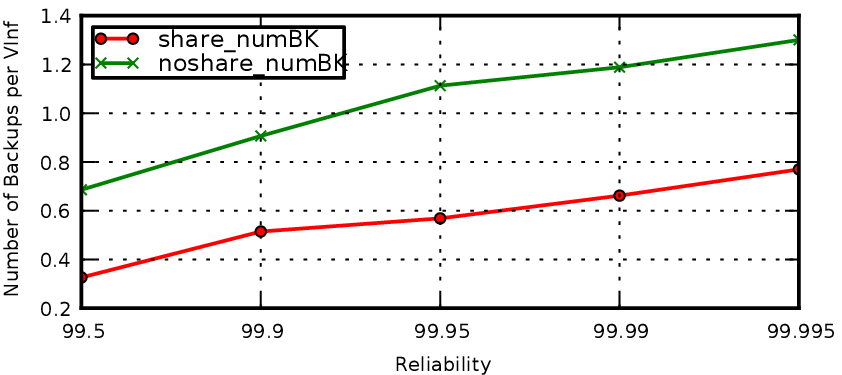}} \shorten \\
		\subfloat[]{%
			\includegraphics[height=\figsheight]{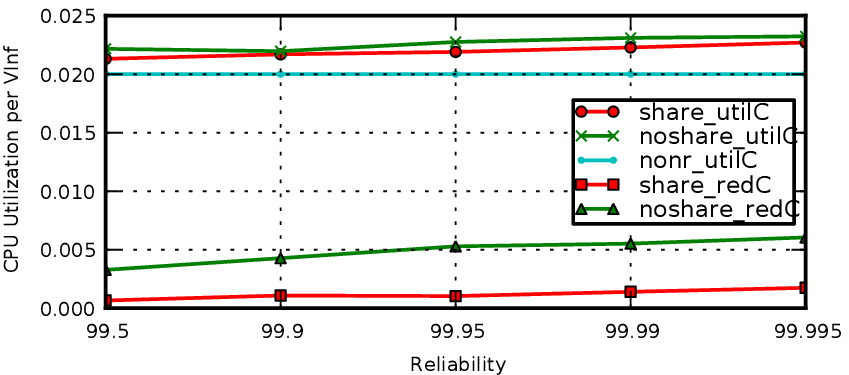}} \shorten \\
		\subfloat[]{%
			\includegraphics[height=\figsheight]{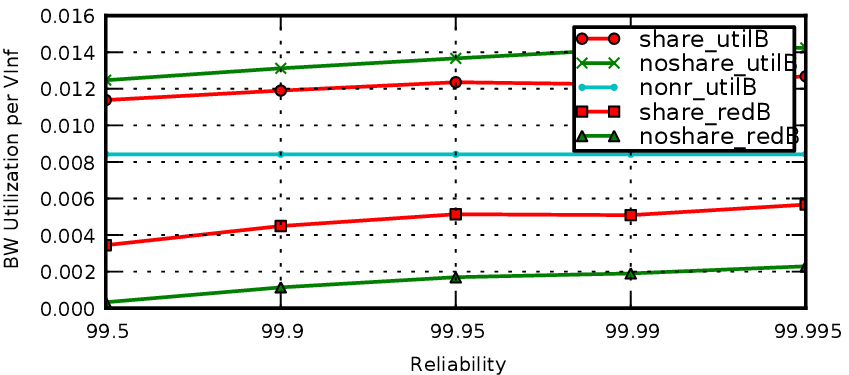}} \shorten \\
	\end{minipage} \hfil
	\begin{minipage}{.48\linewidth}
		\raggedleft
		\subfloat[]{%
			\includegraphics[height=\figsheight]{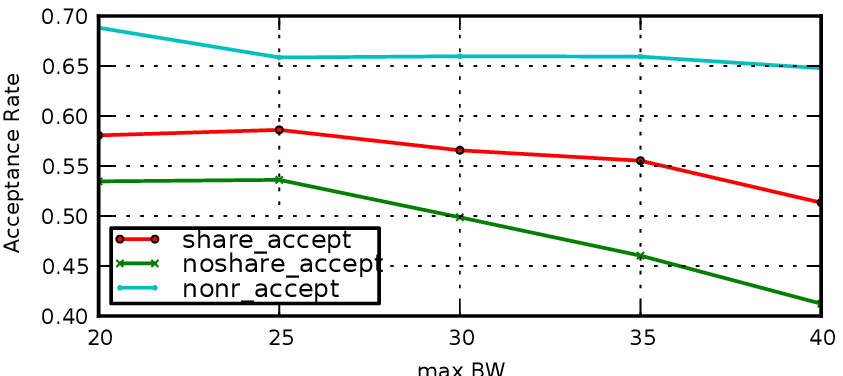}} \shorten \\
		\subfloat[]{%
			\includegraphics[height=\figsheight]{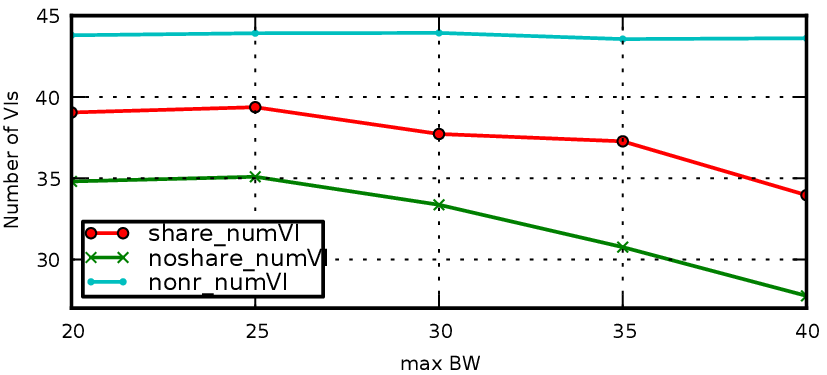}} \shorten \\
		\subfloat[\label{fig:compareRBW:bBk}]{%
			\includegraphics[height=\figsheight]{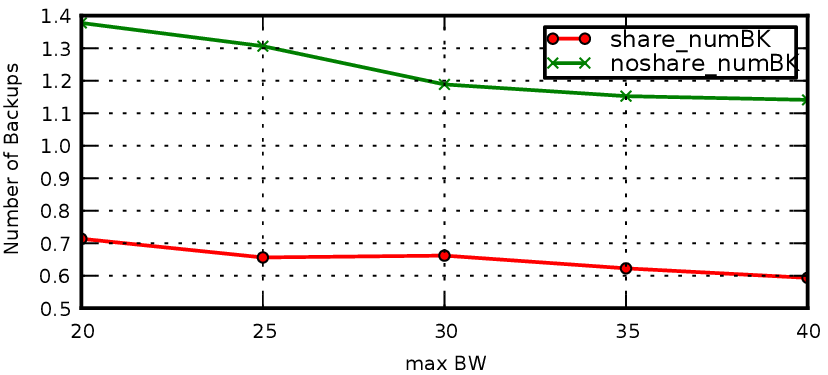}} \shorten \\
		\subfloat[]{%
			\includegraphics[height=\figsheight]{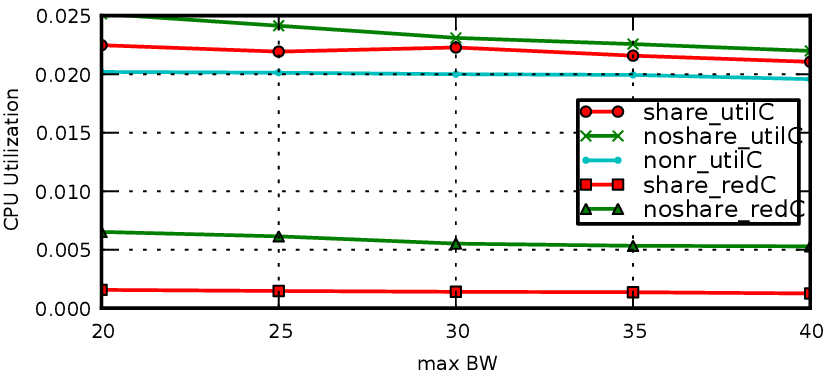}} \shorten \\
		\subfloat[]{%
			\includegraphics[height=\figsheight]{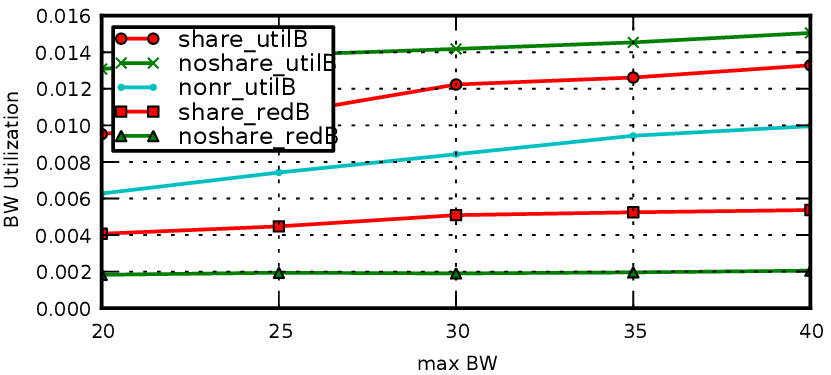}} \shorten \\
	\end{minipage}
	\caption{Figures above compare the performance of the system when backup nodes and links are shared against unshared. This is studied under varying reliability guarantees (left) and bandwidth requirements (right) of the VInf requests. {\bf nonr} is the baseline case where VInfs are admitted without any redundancy.}
	\label{fig:compareRBW}
\end{figure*}
\restoregeometry

\section{Related Work}
\label{sec:related}

Network virtualization is a promising technology to reduce the operating costs and management complexity of networks, and it is receiving an increasing amount of research interest~\cite{Chowdhury:stateart}. Reliability is bound to become a more and more prominent issue as the infrastructure providers move toward virtualizing their networks over simpler, cheaper commodity hardware~\cite{Trellis}.

Analysis on the reliability of overlay networks in terms of connectivity in the overlays has been developed~\cite{LeeKY10}. It achieves a good estimation of connectivity in the embedded overlay networks through a Monte Carlo simulation-based algorithm. Unfortunately, it is not applicable to our problem as we are concerned with critical virtual nodes and embedding them, as well as the whole infrastructure with reliability guarantees.

Fault tolerance is provided in data centers~\cite{Portland,Bcube} through special design of the network: having large excess of nodes and links in an organized manner as redundancies. These works provide reliability to the data center as a whole, but do not customize reliability guarantees to embedded virtual infrastructures.

While we are not aware of works studying the allocation of reliable virtual networks, \cite{debug} considered the use of ``shadow VNet'', namely a parallel virtualized slice, to study the reliability of a network. However, such slice is not used as a back-up, but as a monitoring tool, and as a way to debug the network in the case of failure. \cite{Wang08} considered the use of virtualized router as a management primitive that can be used to migrate routers for maximal reliability.

Meanwhile there are some works targeted at node fault tolerance at the server virtualization level. Bressoud~\cite{Bressoud96} was the first few to introduce fault tolerance at the hypervisor. Two virtual slices residing on the same physical node can be made to operate in sync through the hypervisor. However, this provides reliability against software failures at most, since the slices reside on the same node.

Others~\cite{Brendan08,Clark05} have made progress for the virtual slices to be duplicated and migrated over a network. Various duplication techniques and migration protocols were proposed for different types of applications (web servers, game servers, and benchmarking applications)~\cite{Clark05}. Remus~\cite{Brendan08} and Kemari~\cite{kemari08} are two other systems that allows for state synchronization between two virtual nodes for full, dedicated redundancy. However, these works focus on the practical issues, and do not address the resource allocation issue (in both compute capacity and bandwidth) while having redundant nodes residing somewhere in the network.

VNsnap~\cite{vnsnap09} is another method developed to take static snapshots of an entire virtual infrastructure to some reliable storage, in order to recover from failures. This can be stored reliably and distributedly as replicas~\cite{bigtable06}, or as erasure codes~\cite{DimakisAG06,Draid03}. There is no synchronization, and whether the physical infrastructure has sufficient resources to recover automatically using the saved snapshots is another question altogether.

At a fundamental level, there are methods to construct topologies for redundant nodes that address both nodes and links reliability~\cite{AjtaiM92,DuttS97}. Based on some input graph, additional links (or, bandwidth reservations) are introduced optimally such that the least number is needed. However, this is based on designing fault tolerance for multiprocessor systems which are mostly stateless. A node failure, in this case, involve migrations or rotations among the remaining nodes to preserve the original topology. This may not be suitable in a virtualized network scenario where migrations may cause much disruptions to parts of the network that are unaffected by the failure.

Our problem formulation involves virtual network embedding~\cite{VineYard09,IsoMorph09,VNetPathSplit08} with added node and link redundancy for reliability. In particular, our model employs the use of path-splitting~\cite{VNetPathSplit08}. Path-splitting is implicitly incorporated in our multi-commodity flow problem formulation. Path-splitting allows a flow between two nodes to be split over multiple routes such that the aggregate flow across those routes equal to the demand between the two nodes. This gives more resilience to link failures and allows for graceful degradation.

\section{Conclusion}
\label{sec:con}

We considered the problem of efficiently allocating resources in a virtualized physical infrastructure for Virtual Infrastructure (VInfs) with reliability guarantees, which is guaranteed through redundant nodes and links. Since a physical infrastructure hosts multiple VInfs, it is more resource efficient to share redundant nodes between VInfs. We introduced a pooling mechanism to share these redundancies for both independent and cascading types of failures. The physical footprint of redundant links can be reduced as well, by considering the maximum over all failure scenarios while allocating resources with a linear program adapted from the Multi-Commodity Flow problem. Both mechanisms have significant impact in conserving resources and improving VInf acceptance rates.


\bibliographystyle{abbrv}
\bibliography{IEEEabrv,paper}

\newpage
\appendix

\section{Backups for cascading failure models} 
\label{app:kbackup}

The number of backups inherently depends on the failure model of the virtual infrastructure, as expressed in the reliability equation derived in \autoref{sec:numbackup}
\begin{equation}
	\tag{\ref{eqn:simpReliability}}
	r(k) = \sum_{y=0}^{k} \sum_{x=0}^{\min(n,y)} {k \choose y-x} p^{y-x} (1-p)^{k-(y-x)} f(x) .
\end{equation}
The core idea is to compute the distribution of the number of failed nodes based on the failure model $f(x)$ and iteratively search for a minimum $k$ that satisfies the reliability target $r$. The distribution $f(x)$ of three cascading models are computed as below.

\subsection{Load-based Model} 

The model \cite{DobsonI05} consists of $n$ identical nodes with an initial load uniformly distributed between $L^\text{min}$ and $L^\text{max}$, and a failure threshold of $L^\text{fail}$ on each node. In the first round $i=0$, every node is loaded with some disturbance $D$. The number of node failures $M_i$ are noted and every surviving node is incremented with a load equals $M_i P$. The cycle then continues until there are no node failures.

We use the following result from \cite{DobsonI05} in computing the number of backups required to guarantee reliability $r$. The distribution of the number of failed nodes is
\begin{equation}
	\label{eqn:loadBal}
	f_\text{LOAD} (x) = \begin{cases}
		\displaystyle{n \choose x} \phi(d) (d+xp)^{x-1} (\phi(1-d-xp))^{n-x}, & 0 \le x < n \\
		\displaystyle 1 - \sum_{x=0}^{n-1} f_\text{LOAD} (x), & x = n ,
	\end{cases}
\end{equation}
where the normalized load increase $p = \frac{P}{L^\text{max} - L^\text{min}}$, the normalized initial disturbance $d = \frac{D + L^\text{max} - L^\text{fail}}{L^\text{max} - L^\text{min}}$, and the saturation function
\begin{equation}
	\phi (z) = \begin{cases}
		0, & z < 0 \\
		z, & 0 \le z \le 1 \\
		1, & z > 1 .
	\end{cases}
\end{equation}
The above assumes $0^0 \triangleq 1$ and $\frac{0}{0} \triangleq 1$.


\subsection{Tree-based Model} 
\label{sec:ptree}

This model \cite{IyerSM09} is a continuous-time Markov Chain (CTMC) with the cascading effect described by rate $\phi_{ij}$, where a node of category $i$ causes a node of category $j$ to fail instantaneously. The state of the system is a $n+1$-dimensional vector $\chi = (e, \chi_1, \ldots, \chi_n)$ that counts the number of failed nodes in each category $\chi_i$, and the operating environment $e$ the system is in. Operating environments are used to describe various possible loading to the system, and the rate of transiting from one environment $e$ to the next is $\nu_e$. All states are recurrent with node failure and repair rates $\lambda_{i,e}$ and $\mu_{i,e}$, respectively.

Although the model considers some redundancy per category, we cannot directly import the result as the redundant nodes are not shared across the $n$ categories. To adopt this model for our system in \autoref{sec:numbackup}, we first assume that there are no redundancies, and then evaluate the failure distribution $f_\text{TREE} (x) = \sum_{F_x \subseteq \{1, \ldots, n\}} Pr(F_x)$, which can be used in \eqref{eqn:simpReliability} for $k$ shared redundant nodes. Specifically, the maximum number of each node category $i$ is restricted to 1, and the repair rate is set to a value that best describes MTTR.

Essentially, each state of the CTMC directly corresponds to a possible failure scenario $F_x$, and the failure distribution $f_\text{TREE} (x)$. In other words, the probability of $x$ failed nodes is the sum of steady-state probabilities $\pi(\chi)$ of all CTMC state-vectors $\chi$ whose numeric elements sum to $x$, i.e.,
\begin{equation}
	f_\text{TREE} (x) = \sum_{\chi : x = \sum_{i=1}^n \chi_i} \pi(\chi)
\end{equation}
Finding the said probabilities reduces to computing the infinitesimal generator matrix $Q$ of the CTMC, which is given in Algorithm~4 of the reference \cite{IyerSM09}. The algorithm generates all cascading failure trees, computes the rate of transition from one state to another, and fills in entries of $Q$ matrix. The steady-state probabilities $\pi$ is then obtained through solving the linear equations $\pi^T Q = 0$ and $\pi^T ( 1, \ldots, 1 ) = 1$ using Singular Value Decomposition.


\subsection{Degree-based Model} 
\label{app:degree}

In this cascading failure model \cite{WattsDJ02}, every node has a failure threshold $\phi$ that is picked from some random distribution $r(\phi)$, where $\int_0^1 \! r(\phi) \, d\phi = 1$. The network is initially perturbed with a small number of node failures. The cascading effect is defined as follows: a node will fail if the fraction of its neighbors that have failed is more than its threshold $\phi$. The failures then propagate until the failure condition cannot be met on any surviving node.

The failures will, with finite probability, cascade into a network-wide failure if the average degree $z = \sum_d d p_d$ for some node degree distribution $p_d$ satisfies the following
\begin{align}
	\label{eqn:cascadeDeg}
	z &< \sum_{d} d(d-1) \rho_d p_d , \\
\text{where} \quad \rho_d &= \begin{cases}
	1, & d=0, \\
	\int_0^{1/d} \! r(\phi) \, d\phi, & d>0 .
\end{cases}
\end{align}
This happens with probability
\begin{equation}
	P_d = 1 - (1 - G_0(1) + G_0(H_1(1)))^d,
\end{equation}
when the initial failed node is of degree $d$, where $G_0 (s) = \sum_d \rho_d p_d s^d$ and $H_1(s)$ is the solution to
\begin{align}
	H_1(s) &= 1 - G_1(1) + s G_1(H_1(s)), \\
\text{and} \quad G_1(s) &= \frac{1}{z} \frac{\partial}{\partial s} G_0(s) .
\end{align}

Hence, if the condition \eqref{eqn:cascadeDeg} is satisfied, the probability that all $n$ nodes will fail in a cascade is $f_\text{DEGREE}(n) = \sum_d P_d p_d$. Since the distribution of $f_\text{DEGREE} (x)$ is unknown for $x < n$, we use the worst case distribution $f'_\text{DEGREE} (x)$ into \eqref{eqn:simpReliability}, i.e,
\begin{equation}
	f'_\text{DEGREE} (x) = \begin{cases}
		1 - f_\text{DEGREE} (n), & x = n-1 \\
		f_\text{DEGREE} (n), & x = n \\
		0, & \text{otherwise.}
	\end{cases}
\end{equation}
Unfortunately, the minimum $k$ evaluated through the above will never be less than $n-1$, which gives higher reliability than the required $r$. The margin closes when the probability of a cascading failure is high. To get a tighter value of $k$, a Monte-Calo simulation could be performed to estimate the distribution $f(x)$.


\subsection{Numerical method for $k$} 
\label{app:iterate}

Evaluating for $k$ equates to solving the inverse of \eqref{eqn:simpReliability}. We briefly describe a numerical method, which makes use of two properties of $r$, to accomplish the task.
\begin{enumerate}
	\item $r$ increases monotonically as the number of redundant nodes $k$ increases for a given failure distribution $f(x)$, and
	\item the maximum value of $k_\text{max}$ for any failure distribution $f(x)$ is the smallest integer that satisfy
		\begin{equation}
			\label{eqn:kmax}
			r(k_\text{max}) \le \sum_{y=0}^{k_\text{max}-n} { k_\text{max} \choose y } p^y (1-p)^{k_\text{max}-y}
		\end{equation}
\end{enumerate}
The latter is the worst case scenario where all $n$ virtual nodes fail with probability 1. Then, $k$ should be large enough such that, with probability $r$, there are $n$ surviving nodes. A straightforward numerical method is to use a classical binary search for $k$ as shown in \autoref{alg:binsearchk}. Another option would be to do a gradient descent of \eqref{eqn:simpReliability} on $k$.

\begin{algorithm}
	\caption{Search for $k$}
	\label{alg:binsearchk}
	\begin{algorithmic}[1]
		\State $k \gets$ BinSearchK($r$, 0, $k_\text{max}$)
		\State
		\Procedure{BinSearchK}{$r$, $\underline{k}$, $\overline{k}$}
		\If {$\underline{k} + 1 \ge \overline{k}$}
			\If {$r(\underline{k}) < r$}
				\State \Return $\overline{k}$
			\Else
				\State \Return $\underline{k}$
			\EndIf
		\Else
			\State $k' \gets \lceil \frac{1}{2} (\underline{k} + \overline{k}) \rceil$
			\State $r' \gets r(k')$ 
			\If { $r' < r$ }
				\State \Return BinSearchK($r$, $k'+1$, $\overline{k}$)
			\Else
				\State \Return BinSearchK($r$, $\underline{k}$, $k'$)
			\EndIf
		\EndIf
		\EndProcedure
	\end{algorithmic}
\end{algorithm}

The time complexity to compute $k$ is $O(n^2 \log n)$ with the procedure listed in \autoref{alg:binsearchk}.
\begin{proof}
	From \autoref{thm:linearK} and \eqref{eqn:kmax}, $k_\text{max} = O(n)$. Hence, the number of calls to the BinSearchK procedure in \autoref{alg:binsearchk} is $O(\log n)$. The time complexity in evaluating the reliability of every intermediate point $k'$ using \eqref{eqn:simpReliability} in the worst case is $O(k^2)$.
	Invoking \autoref{thm:linearK} again gives us the combined time complexity to be $O(n^2 \log n)$.
\end{proof}



\end{document}